\newcommand{\blind}{0}
\theoremstyle{plain}
\newtheorem{theorem}{Theorem}[section]
\newtheorem{lemma}[theorem]{Lemma}
\theoremstyle{definition}
\theoremstyle{remark}
\DeclareMathOperator{\logit}{logit}
\begin{document}

\def\spacingset#1{\renewcommand{\baselinestretch}%
{#1}\small\normalsize} \spacingset{1}


\if0\blind
{
  \title{\bf Posterior Averaging Information Criterion}
  \author{Shouhao Zhou \thanks{
    Email: szhou@mdanderson.org}\hspace{.2cm}\\
    Department of Public Health Sciences, Pennsylvania State University}
  \maketitle
} \fi

\if1\blind
{
  \bigskip
  \bigskip
  \bigskip
  \begin{center}
    {\LARGE\bf Title}
\end{center}
  \medskip
} \fi

\bigskip
\begin{abstract}
We propose an innovative model selection method, the posterior averaging information criterion, for Bayesian model assessment from a predictive perspective.  The theoretical foundation is built on the Kullback-Leibler divergence to quantify the similarity between the proposed candidate model and the underlying true model.  From a Bayesian perspective, our method evaluates the candidate models over the entire posterior distribution in terms of predicting a future independent observation.  Without assuming that the true distribution is contained in the candidate models, the new criterion is developed by correcting the asymptotic bias of the posterior mean of the log-likelihood against its expected log-likelihood.  It can be generally applied even for Bayesian models with degenerate non-informative prior.  The simulation in both normal and binomial settings demonstrates an outstanding small sample performance.  
\end{abstract}

\noindent%
{\it Keywords:}  Bayesian modeling, Expected out-of-sample likelihood, Kullback-Leibler divergence, Misspecified model, Predictive model selection

\spacingset{1.45}
\section{Introduction}

Model selection plays a key role in applied statistical practice.  A clearly defined model selection criterion or score usually lies at the heart of any statistical model selection procedure, and facilitates the comparison of competing models through the assignment of some sort of preference or ranking to the alternatives.  Standard criteria include adjusted $R^2$ (Wherry, 1931), Akaike information criterion (AIC; Akaike, 1973), minimum description length (MDL; Rissanen, 1978) and Schwarz information criterion (SIC; Schwarz, 1978), to name but a few. 

To choose a proper criterion for a statistical data analysis project, it is essential to distinguish the ultimate goal of modeling.  Geisser and Eddy (1979) challenged research workers with two fundamental questions that should be asked in advance of any procedure conducted for model selection:\vspace{-.05in}
\begin{enumerate}
	\item Which of the models best explains a given set of data? 
	\item Which of the models yields the best predictions for future observations from the same process that generated the given set of data?\vspace{-.05in}
\end{enumerate}
The former question, which concerns the accuracy of the model in describing the current data, has been an empirical problem for many years. It represents the explanatory perspective. The latter question, which represents the predictive perspective, concerns the accuracy of the model in predicting future data, having drawn substantial attention in recent decades. If an infinitely large quantity of data is available, the predictive perspective and the explanatory perspective may not differ significantly.  However, with a limited number of observations, as we encounter in practice, it is challenging for predictive model selection methods to achieve an optimal balance between goodness of fit and parsimony.  

A substantial group of predictive model selection criteria were proposed based on the Kullback-Leibler information divergence (Kullback and Leibler, 1951), an objective measure to estimate the overall closeness of a probability distribution and the underlying true model. 
On both theoretical and applied fronts, Kullback-Leibler divergence in model selection has drawn a huge amount of attention, and a large related body of literature now exists for both frequentist and Bayesian inference.


Bayesian approaches to statistical inference have specific concerns regarding the interpretation of parameters and models.  
However, most of the Kullback-Leibler based Bayesian criteria follow essentially the frequentist paradigm insofar as they select a model using plug-in estimators of the parameters.  Starting from the Bayesian predictive information criterion (BPIC; Ando, 2007), model selection criteria were developed over the entire posterior distribution.  Nevertheless, BPIC has a number of limitations, particularly with asymmetric posterior distributions.  Furthermore, BPIC is undefined under improper prior distributions, while the expected penalized loss assumes that the true model contained in the approximating family, which limits its use in practice.


To explain the intuition of the proposed Bayesian criterion, in Section \ref{sec2} we review the Kullback-Leibler divergence, its application and development in frequentist statistics and the adaption to Bayesian modeling based on plug-in parameter estimation.  In Section \ref{sec3}, major attention is given to the Kullback-Leibler based predictive criterion for models evaluated by averaging over the posterior distributions of parameters. A generally applicable method, the posterior averaging information criterion (PAIC), is proposed for comparing different Bayesian statistical models under regularity conditions. Our criterion is developed by correcting the asymptotic bias of using the posterior mean of the log-likelihood as an estimator of its expected log-likelihood, and we prove that the asymptotic property holds even though the candidate models are misspecified.  In Section \ref{sec4} we present some numerical studies in both normal and binomial cases to investigate its performance with small sample sizes. 
We conclude with a few summary remarks and discussions in Section \ref{sec6}.

\section{Kullback-Leibler divergence and model selection}\label{sec2}
Kullback and Leibler (1951) derived an information measure to assess the directed `distance' between any two models. If we assume that $f(\tilde{y})$ and $g(\tilde{y})$ respectively represent the probability density distributions of the `true model' and the `approximate model' on the same measurable space, the Kullback-Leibler divergence is defined by
$$I(f,g)=\int f(\tilde{y})\cdot \log \frac{f(\tilde{y})}{g(\tilde{y})}d\tilde{y}= E_{\tilde{y}} [\log f(\tilde{y})]- E_{\tilde{y}}[\log g(\tilde{y})],$$
which is always non-negative, reaching the minimum value of $0$ when $f$ is the same as $g$ almost surely. It is interpreted as the `information' lost when $g$ is used to approximate $f$. Namely, the smaller the value of $I(f,g)$, the closer we consider the model $g$ to be to the true distribution.

Only the second term of $I(f,g)$ is relevant in practice to compare different possible models without full knowledge of the true distribution. This is because the first term, $E_{\tilde{y}} [\log f(\tilde{y})]$, is a constant that depends on only the unknown true distribution $f$, and can be neglected in model comparison for given data. 

Let $\textbf{y}=(y_1,y_2,\cdots,y_n)$ be $n$ independent observations in the data and $\tilde{y}$, an unknown but potentially observable quantity, represents a future independent observation has the same probability density function $f(\tilde{y})$, and an approximate model $m$ with density $g_m(\tilde{y}|\theta^m)$ among a list of potential models $m=1,2,\cdots,M$.  For notational purposes, we ignore the model index $m$ when there is no ambiguity.  The true model $f$ is referred to as the \textit{unknown} data generating mechanism, not necessarily to be encompassed in the approximate model family. 

As $n\rightarrow\infty$, the average of the log-likelihood $$\frac{1}{n}L(\theta | \textbf{y})=\frac{1}{n}\sum_{i=1}^{n}{\log g(y_i|\theta)}$$ tends to $E_{\tilde{y}} [\log g(\tilde{y} | \theta )]$ by the law of large numbers, which suggests how we can estimate the second term of $I(f,g)$.  

The model selection based on the Kullback-Leibler divergence is straightforward when all the operating models are fixed probability distributions, i.e., $g(\tilde y|\theta)=g(\tilde y)$.  The model with the largest empirical log-likelihood $\sum_i \log g(y_i)$ is favored.  However, when the distribution family $g(\tilde{y}|\theta)$ contains some unknown parameters $\theta$, the model fitting should be done first so that we may know what values the free parameters will probably take, given the data.  Therefore, the log-likelihood is not optimal for the predictive modeling, when the data were used twice in both model fitting and evaluation. For a desirable out-of-sample predictive performance, a common idea is to identify a bias correction term to rectify the over-estimation bias of the in-sample estimate. 

In the frequentist setting, the general model selection procedure chooses candidate models specified by some point estimate $\hat{\theta}$ based on a certain statistical principle such as maximum likelihood.   A considerable amount of theoretical research has addressed this problem by correcting for the bias of $\frac{1}{n} \sum_i \log g(y_i| \hat\theta)$ in estimation of $E_{\tilde{y}} [\log g(\tilde{y} | \hat \theta )]$ (Akaike, 1973; Takeuchi, 1976; Hurvich and Tsai, 1989; Murata et al., 1994; Konishi and Kitagawa, 1996).  A nice review can be found in Burnham and Anderson (2002).

Since the introduction of the Akaike Information Criterion (AIC; Akaike, 1973), researchers have commonly applied frequenstist model selection methods into Bayesian modeling.  However, the differences in the underlying philosophies between Bayesian and frequentist statistical inference caution against such direct applications.  There also have been a few attempts to specialize the Kullback-Leibler divergence for Bayesian model selection (Geisser and Eddy, 1979; San Martini and Spezzaferri, 1984; Laud and Ibrahim, 1995) in the last century.  Such methods are limited either in the scope of methodology or computational feasibility, especially when the parameters of the Bayesian models are in high-dimensional hierarchical structures.

The seminal work of Spiegelhalter et al. (2002, 2014) proposed Deviance Information Criterion (DIC) as a Bayesian adaption to AIC and implemented it within Bayesian inference using Gibbs sampling (BUGS; Spiegelhalter et al., 1994).  Although the estimation lacks a theoretical foundation (Meng and Vaida, 2006; Celeux et al., 2006a), $-\textsc{dic}/2n$, as a model selection criterion, heuristically estimates $E_{\tilde{y}} [\log g(\tilde{y} | \bar \theta )]$, the expected out-of-sample log-likelihood specified at the posterior mean, after assuming that the proposed model encompasses the true model.   Alternative methods can be found either using a similar approach for mixed-effects models (Vaida and Blanchard, 2005; Liang et al., 2009; Donohue et al. 2011) or using numerical approximation (Plummer, 2008) to estimate cross-validative predictive loss (Efron, 1983).  

\section{Posterior averaging information criterion}
\label{sec3}
\subsection{Posterior averaged discrepancy function for model selection}
The preceding methods in general can be viewed as Bayesian adaptation of the information criteria originally designed for frequentist statistics, when each model is assessed in terms of the similarity between the true distribution $f$ and the model density function specified by the plug-in parameters.  This may not be ideal since, in contrast to frequentist modeling, ``Bayesian inference is the process of fitting a probability model to a set of data and summarizing the result by \textit{a probability distribution on the parameters} of the model and on unobserved quantities such as predictions for new observations'' (Gelman et al., 2003).  Rather than considering a model specified by a point estimate, it is more reasonable to assess the goodness of a Bayesian model in terms of prediction against the posterior distribution.     

Obtaining the posterior averaged Kullback-Leibler discrepancy, rather than the Kullback-Leibler discrepancy specified at some point estimate, could be more computationally intensive, requiring a large set of posterior samples for numerical averaging when analytical form is not available.  However, advanced computer technology developed in recent years has made this computational cost much more feasible for Bayesian model selection.  Reviews on recent developments can be found in the next section.

\subsection{Posterior averaging information criterion}

Ando (2007) proposed an estimator for the posterior averaged discrepancy function, $$\eta=E_{\tilde{y}}[E_{\theta | \textbf{y}}\log g(\tilde{y}| \theta )].$$  Under certain regularity conditions, it was shown that an asymptotic unbiased estimator of $ \eta$ is
\begin{eqnarray}
\hat \eta ^{BPIC} 
&=& \frac1n E_{\theta | \textbf{y}} \log L (\theta | \textbf{y}) - \frac1n [\ E_{\theta | \textbf{y}}\log \{ \pi (\theta) L(\theta | \textbf{y}) \} - log \{\pi (\hat{\theta}) L(\hat{\theta}| \textbf{y})\} \nonumber \\
 &&            + tr\{J_{n}^{-1}(\hat{\theta}) I_n(\hat{\theta})\}+\frac{K}{2}] \label{eq:Ando.orig} \\
&\triangleq& \frac1n E_{\theta | \textbf{y}} \log L (\theta | \textbf{y}) - BC_1 \nonumber \\ 
&=& \frac1n log \{\pi (\hat{\theta}) L(\hat{\theta}| \textbf{y})\}- \frac1n [\ E_{\theta | \textbf{y}}\log \pi (\theta ) + tr\{J_{n}^{-1}(\hat{\theta}) I_n(\hat{\theta})\}+\frac{K}{2}] \label{eq:Ando} \\ 
&\triangleq& \frac1n log \{\pi (\hat{\theta}) L(\hat{\theta}| \textbf{y})\} - BC_2. \nonumber
\end{eqnarray}
Here, $BC$ denotes the bias correction term, $\hat{\theta}$ is the posterior mode, $K$ is the cardinality of $\theta$, and matrices $J_n$ and $I_n$ are some empirical estimators for the Bayesian asymptotic Hessian matrix, 
$$J(\theta )=-E_{\tilde{y}}\left(\frac{\partial ^{2}\log \{g(\tilde{y}| \theta )\pi_{0}(\theta )\}}{\partial \theta \partial \theta ^{\prime }}\right)$$
and Bayesian asymptotic Fisher information matrix,
$$I(\theta )=E_{\tilde{y}}\left(\frac{\partial \log \{g(\tilde{y}| \theta )\pi_{0}(\theta )\}}{\partial \theta }\frac{\partial \log \{g(\tilde{y}| \theta
)\pi _{0}(\theta )\}}{\partial \theta ^{\prime }}\right),$$
where $\log \pi_0(\theta)=\lim_{n\rightarrow\infty}n^{-1}\log \pi(\theta)$.

The BPIC is introduced as $-2n\cdot \hat \eta ^{BPIC}$ and applicable when the true model $f$ is not necessarily in the specified family of probability distributions. In model comparison, the candidate model with a minimum BPIC value is favored. However, it has the following limitations in practice. 

\noindent 1.  Equation (\ref{eq:Ando.orig}) was from the original presentation for BPIC in Ando (2007). After simple math cancelling out the term $ \frac1n E_{\theta | \textbf{y}} \log L (\theta|\textbf{y})$ in both estimator and bias correction term, it was actually the plug-in estimate $\frac1n log \{\pi (\hat{\theta}) L(\hat{\theta}|\textbf{y})\}$, as shown in equation (\ref{eq:Ando}), in estimation of $\eta$ with some bias correction.  Compared with the natural estimator $n^{-1} E_{\theta|\textbf{y}} \log [L(\theta|\textbf{y})]$, the estimation efficiency of $\eta$ using plug-in estimator is suboptimal when the posterior distribution is asymmetric or with non-zero correlation between parameters, which occurs in a majority of cases in Bayesian modeling. This will be further illustrated in our simulation studies when we compare the bias correction performance of various criteria in small sample size. 

\noindent 2. The BPIC cannot be calculated when the prior distribution $\pi (\theta )$ is degenerate, a situation that commonly occurs in Bayesian analysis when an objective non-informative prior is selected. For example, if we use non-informative prior $\pi(\mu) \propto 1$ for the mean parameter $\mu$ of the normal distribution in the following section \ref{simu1}, the values of $\log \pi(\hat \theta)$ and $E_{\theta | \textbf{y}}\log \pi (\theta )$ in equation (\ref{eq:Ando}) are undefined.

In order to avoid those drawbacks, we propose a new model selection criterion in terms of the posterior mean of the empirical log-likelihood $\hat{\eta}=\frac{1}{n}\sum_{i}E_{\theta | \textbf{y}}[\log g(y_{i}|\theta )]$, a natural estimator of $\eta $.  Without losing any of the attractive properties of BPIC, the new criterion expands the model scope to all Bayesian models under regularity conditions, improves the unbiased property for small samples, and enhances the robustness of the estimation.

Because all the data $\textbf{y}$ are used for both model fitting and model selection, $\hat\eta$ always overestimates $\eta$.  To correct the estimation bias from the overuse of the data, we propose the following theorem.

\begin{theorem}
\label{TM:PAIC}
Let $\textbf{y}=(y_1, y_2, \cdots, y_n)$ be $n$ independent observations drawn from the probability cumulative distribution $F(\tilde{y})$ with density function $f(\tilde{y})$.  Consider $\mathcal{G}=\{g(\tilde y| \theta );\theta \in \Theta \subseteq \mathbb{R}^{p}\}$ as a family of candidate statistical models that do not necessarily contain the true distribution $f$, where $\theta =(\theta _{1},...,\theta_{p})^{\prime}$ is the $p$-dimensional vector of unknown parameters, with prior distribution $\pi (\theta)$. 
Under the following three regularity conditions: 
\begin{description}
	\item{C1}: Both the log density function $\log g(\tilde y| \theta)$ and the log unnormalized posterior density $\log \{L(\theta | \textbf{y})\pi (\theta )\}$ are twice continuously differentiable in the compact parameter space $\Theta $;
	\item{C2}: The expected posterior mode $\theta_0 = \arg \max_ \theta E_{\tilde{y}}[\log \{g(\tilde{y}| \theta )\pi _{0}(\theta )\}]$ is unique in $\Theta $;
	\item{C3}: The Hessian matrix of $E_{\tilde{y}}[\log \{g(\tilde{y}| \theta )\pi _{0}(\theta )\}]$ is non-singular at $\theta_0$,\\ \vspace{-.1in}
\end{description}
 the bias of $\hat{\eta}$ for $\eta $ can be approximated asymptotically without bias by 
\begin{equation}
\hat{\eta}-\eta =\widehat{b_{\theta }}\cong \frac{1}{n}tr\{J_{n}^{-1}(\hat{\theta})I_{n}(\hat{\theta})\},
\end{equation}
where $\hat{\theta}$ is the posterior mode that minimizes the posterior distribution $\propto \pi (\theta) \prod_{i=1}^{n} g(y_i| \theta)$ and 
\begin{eqnarray*}
J_{n}(\theta ) &=&-\frac{1}{n}\sum_{i=1}^{n}(\frac{\partial ^{2}\log \{g(y_{i}| \theta )\pi^{\frac{1}{n}}(\theta )\}}{\partial \theta \partial \theta ^{\prime }}) \\
I_{n}(\theta ) &=&\frac{1}{n-1}\sum_{i=1}^{n}(\frac{\partial \log \{g(y_{i}| \theta )\pi^{\frac{1}{n}} (\theta )\}}{\partial \theta }\frac{\partial \log \{g(y_{i}| \theta )\pi ^{\frac{1}{n}}(\theta )\}}{\partial \theta ^{\prime }}).
\end{eqnarray*}
\end{theorem}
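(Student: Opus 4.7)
The plan is to establish the bias formula via a two-layer asymptotic expansion. The first layer uses the Bayesian central limit theorem (Laplace approximation) to reduce the two posterior-averaged quantities $\hat{\eta}$ and $\eta$ to their plug-in counterparts at the posterior mode $\hat\theta$, producing $O(p/n)$ corrections that cancel on subtraction. The second layer is a Takeuchi-style sampling expansion of the remaining plug-in discrepancy at $\hat\theta$, which produces the trace correction $\tfrac{1}{n}\mathrm{tr}\{J^{-1}(\theta_0)I(\theta_0)\}$ that, after replacement of the population matrices by their empirical plug-ins evaluated at $\hat\theta$, gives the stated result.

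For the first layer, write $\hat\eta^{*}(\theta) = n^{-1}\sum_{i} \log g(y_i|\theta)$ and $\eta^{*}(\theta) = E_{\tilde{y}}\log g(\tilde{y}|\theta)$, so that $\hat\eta = E_{\theta|\textbf{y}}\hat\eta^{*}(\theta)$ and $\eta = E_{\theta|\textbf{y}}\eta^{*}(\theta)$. Under C1-C3 the Laplace approximation delivers $E_{\theta|\textbf{y}}[(\theta-\hat\theta)(\theta-\hat\theta)^{\prime}] = n^{-1} J_n^{-1}(\hat\theta) + o_p(n^{-1})$ and $E_{\theta|\textbf{y}}[\theta-\hat\theta] = O_p(n^{-1})$. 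I Taylor-expand both $\hat\eta^{*}$ and $\eta^{*}$ about $\hat\theta$ to second order and take the posterior expectation. The linear terms are negligible: the $\hat\eta^{*}$ gradient equals $-n^{-1}\partial_\theta\log\pi(\hat\theta) = O(n^{-1})$ by the posterior-mode condition, while the $\eta^{*}$ gradient is $O_p(n^{-1/2})$ because it vanishes at $\theta_0$ by C2 and $\hat\theta-\theta_0 = O_p(n^{-1/2})$. Both quadratic contributions reduce to $-p/(2n) + o_p(n^{-1})$, since $n^{-1}\sum_i \partial^2_\theta\log g(y_i|\hat\theta) = -J_n(\hat\theta) + O(n^{-1})$ and $E_{\tilde y}\partial^2_\theta\log g(\tilde y|\hat\theta) = -J(\hat\theta) + o_p(1)$, with $J(\hat\theta) = J_n(\hat\theta) + o_p(1)$. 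Subtraction yields
\begin{equation*}
\hat\eta - \eta = \hat\eta^{*}(\hat\theta) - \eta^{*}(\hat\theta) + o_p(n^{-1}).
\end{equation*}

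For the second layer, I analyze the sampling bias $E_{\textbf{y}}[\hat\eta^{*}(\hat\theta) - \eta^{*}(\hat\theta)]$ by expanding both terms about $\theta_0$. The quadratic forms $\tfrac{1}{2}(\hat\theta-\theta_0)^{\prime}\partial^2_\theta\hat\eta^{*}(\theta_0)(\hat\theta-\theta_0)$ and $\tfrac{1}{2}(\hat\theta-\theta_0)^{\prime}\partial^2_\theta\eta^{*}(\theta_0)(\hat\theta-\theta_0)$ agree up to $O_p(n^{-3/2})$ and cancel on subtraction, while $E_{\textbf{y}}[\hat\eta^{*}(\theta_0) - \eta^{*}(\theta_0)] = 0$ by construction. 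The surviving contribution is the cross-product $s_n^{\prime}(\hat\theta-\theta_0)$ with $s_n = n^{-1}\sum_i \partial_\theta\log g(y_i|\theta_0)$. Linearizing the posterior-mode equation $\partial_\theta\log\{L(\hat\theta|\textbf{y})\pi(\hat\theta)\}=0$ gives $\hat\theta - \theta_0 = J^{-1}(\theta_0) s_n + O_p(n^{-1})$, hence
\begin{equation*}
E_{\textbf{y}}[\hat\eta^{*}(\hat\theta) - \eta^{*}(\hat\theta)] = \mathrm{tr}\{J^{-1}(\theta_0) E_{\textbf{y}}[s_n s_n^{\prime}]\} + o(n^{-1}) = \tfrac{1}{n}\mathrm{tr}\{J^{-1}(\theta_0) I(\theta_0)\} + o(n^{-1}),
\end{equation*}
by the misspecification-robust identity $E_{\textbf{y}}[s_n s_n^{\prime}] = n^{-1}I(\theta_0)$, which uses only $\partial_\theta\eta^{*}(\theta_0) = 0$, not $f\in\mathcal{G}$. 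Substituting the consistent empirical plug-ins $J_n(\hat\theta), I_n(\hat\theta)$ for the population matrices adds $o_p(n^{-1})$ error and yields the claimed formula.

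The principal obstacle is bookkeeping the prior's $O(n^{-1})$ contributions at every stage while keeping the argument valid for improper priors where $\log\pi$ is not absolutely defined. The $\pi^{1/n}$ normalization in $J_n$ and $I_n$, together with the limit $\pi_0 = \lim n^{-1}\log\pi$, is the device that absorbs all prior derivatives so that the bias reduces to the trace form regardless of whether $\pi$ is proper. A secondary technicality is the uniform validity of the Laplace expansion for posterior moments under misspecification; this follows from C1-C3 by standard domination arguments, with C3 guaranteeing non-degeneracy of the expansion even when $f \notin \mathcal{G}$.
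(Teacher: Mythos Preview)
Your argument is correct, but it follows a genuinely different route from the paper's. The paper expands $\eta=E_{\tilde y}E_{\theta|\textbf{y}}\log g(\tilde y|\theta)$ directly about the \emph{expected} posterior mode $\theta_0$, obtaining a decomposition $I_1+I_2+I_3$; the key ingredient is a lemma expressing $E_{\theta|\textbf{y}}(\theta-\theta_0)(\theta-\theta_0)'$ as the sum $n^{-1}J_n^{-1}(\hat\theta)+n^{-1}J_n^{-1}(\theta_0)I(\theta_0)J_n^{-1}(\theta_0)$, i.e.\ posterior and sampling variability are handled jointly. A companion lemma then links $E_{\tilde y}\log\{g(\tilde y|\theta_0)\pi_0(\theta_0)\}$ back to $E_{\theta|\textbf{y}}n^{-1}\log\{L(\theta|\textbf{y})\pi(\theta)\}$, and the various trace terms collapse to $n^{-1}\mathrm{tr}\{J_n^{-1}I_n\}$ after substituting $\hat\theta$ for $\theta_0$. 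By contrast, you decouple the two sources of randomness: a Laplace expansion about $\hat\theta$ shows the posterior-averaging corrections to $\hat\eta$ and $\eta$ are both $-p/(2n)+o_p(n^{-1})$ and cancel, reducing the problem to the plug-in discrepancy $\hat\eta^{*}(\hat\theta)-\eta^{*}(\hat\theta)$; a standard Takeuchi expansion about $\theta_0$ then delivers the trace. Your route is more modular and makes transparent why PAIC's penalty coincides with the TIC penalty, at the cost of treating the prior contributions somewhat informally (your identities $\nabla\eta^{*}(\theta_0)=0$ and $E_{\textbf y}[s_ns_n']=n^{-1}I(\theta_0)$ hold as stated only when $\log\pi_0\equiv 0$, the typical case of a fixed prior). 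The paper's single expansion about $\theta_0$ carries the $\pi_0$ terms explicitly throughout, which is tidier when the prior is allowed to scale with $n$, but requires the auxiliary lemma machinery.
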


\begin{proof} Recall that the quantity of interest is $E_{\tilde{y}}E_{\theta |\textbf{y}}\log g(\tilde{y}|\theta )$.  To estimate it, we first check\\ $E_{\tilde{y}}E_{\theta |\textbf{y}}\log \{g(\tilde{y}|\theta )\pi
_{0}(\theta )\}=E_{\tilde{y}}E_{\theta |\textbf{y}}\{\log g(\tilde{y}|\theta )+\log
\pi _{0}(\theta )\}$ and expand it around $\theta _{0}$,%
\begin{eqnarray}
E_{\tilde{y}}E_{\theta |\textbf{y}}\log \{g(\tilde{y}|\theta )\pi _{0}(\theta )\}
&=&E_{\tilde{y}}\log \{g(\tilde{y}|\theta _{0})\pi _{0}(\theta
_{0})\}+E_{\theta |\textbf{y}}(\theta -\theta _{0})^{\prime }\frac{\partial E_{\tilde{%
y}}\log \{g(\tilde{y}|\theta )\pi _{0}(\theta )\}}{\partial \theta }%
|_{\theta =\theta _{0}}  \notag \\
&&+\frac{1}{2}E_{\theta |\textbf{y}}[(\theta -\theta _{0})^{\prime }\frac{\partial
^{2}E_{\tilde{y}}\log \{g(\tilde{y}|\theta )\pi _{0}(\theta )\}}{\partial
\theta \partial \theta ^{\prime }}|_{\theta =\theta _{0}}(\theta -\theta
_{0})]+o_{p}(n^{-1})  \notag \\
&=&E_{\tilde{y}}\log \{g(\tilde{y}|\theta _{0})\pi _{0}(\theta
_{0})\}+E_{\theta |\textbf{y}}(\theta -\theta _{0})^{\prime }\frac{\partial E_{\tilde{%
y}}\log \{g(\tilde{y}|\theta )\pi _{0}(\theta )\}}{\partial \theta }%
|_{\theta =\theta _{0}}  \notag \\
&&-\frac{1}{2}E_{\theta |\textbf{y}}[(\theta -\theta _{0})^{\prime }J(\theta
_{0})(\theta -\theta _{0})]+o_{p}(n^{-1})  \notag \\
&\triangleq &I_{1}+I_{2}+I_{3}+o_{p}(n^{-1})  \label{I123}
\end{eqnarray}

The first term $I_{1}$ can be linked to the empirical log likelihood
function as follows:%
\begin{eqnarray*}
E_{\tilde{y}}\log \{g(\tilde{y}|\theta _{0})\pi _{0}(\theta _{0})\} &=&E_{%
\tilde{y}}\log g(\tilde{y}|\theta _{0})+\log \pi _{0}(\theta _{0}) \\
&=&E_{y}\frac{1}{n}\log L(\theta _{0}|\textbf{y})+\log \pi _{0}(\theta _{0}) \\
&=&E_{y}\frac{1}{n}\log \{L(\theta _{0}|\textbf{y})\pi (\theta _{0})\}-\frac{1}{n}%
\log \pi (\theta _{0})+\log \pi _{0}(\theta _{0}) \\
&=&E_{y}E_{\theta |\textbf{y}}\frac{1}{n}\log \{L(\theta |\textbf{y})\pi (\theta )\}-\frac{1}{%
2n}tr\{J_{n}^{-1}(\theta _{0})I(\theta _{0})\} \\
&&+\frac{1}{2n}tr\{J_{n}^{-1}(\hat{\theta})J_{n}(\theta _{0})\}-\frac{1}{n}%
\log \pi (\theta _{0})+\log \pi _{0}(\theta _{0})+o_{p}(n^{-1})
\end{eqnarray*}%
where the last equation holds due to Lemma \ref{1e} (together with other Lemmas, provided in the Appendix).

The second term $I_{2}$ vanishes since 
\begin{equation*}
\frac{\partial E_{\tilde{y}}\log \{g(\tilde{y}|\theta )\pi _{0}(\theta )\}}{%
\partial \theta }|_{\theta =\theta _{0}}=0
\end{equation*}%
as $\theta _{0}$ is the expected posterior mode.

Using Lemma \ref{1d}, the third term $I_{3}$ can be rewritten as 
\begin{eqnarray*}
I_{3} &=&-\frac{1}{2}E_{\theta |\textbf{y}}(\theta -\theta _{0})^{\prime }J(\theta
_{0})(\theta -\theta _{0}) \\
&=&-\frac{1}{2}tr\{E_{\theta |\textbf{y}}[(\theta -\theta _{0})(\theta -\theta
_{0})^{\prime }]J(\theta _{0})\} \\
&=&-\frac{1}{2n}(tr\{J_{n}^{-1}(\theta _{0})I(\theta
_{0})J_{n}^{-1}(\theta _{0})J(\theta _{0})\}+tr\{J_{n}^{-1}(\hat{\theta}%
)J(\theta _{0})\})+o_{p}(n^{-1})
\end{eqnarray*}

By substituting each term in equation (\ref{I123}) and neglecting the
residual term, we obtain%
\begin{eqnarray*}
E_{\tilde{y}}E_{\theta |\textbf{y}}\log \{g(\tilde{y}|\theta )\pi _{0}(\theta )\}
&\simeq &E_{y}E_{\theta |\textbf{y}}\frac{1}{n}\log \{L(\theta |\textbf{y})\pi (\theta )\}-%
\frac{1}{2n}tr\{J_{n}^{-1}(\theta _{0})I(\theta _{0})\} \\
&&+\frac{1}{2n}tr\{J_{n}^{-1}(\hat{\theta})J_{n}(\theta _{0})\}-\frac{1}{n}%
\log \pi (\theta _{0})+\log \pi _{0}(\theta _{0}) \\
&&-\frac{1}{2n}(tr\{J_{n}^{-1}(\theta _{0})I(\theta
_{0})J_{n}^{-1}(\theta _{0})J(\theta _{0})\}+tr\{J_{n}^{-1}(\hat{\theta}%
)J(\theta _{0})\})
\end{eqnarray*}

Recall that we have defined $\log \pi _{0}(\theta)=\lim\nolimits_{n\rightarrow \infty }n^{-1}\log \pi (\theta )$, so that asymptotically we
have 
\begin{eqnarray*}
&\log \pi _{0}(\theta _{0})-\frac{1}{n}\log \pi (\theta _{0})\simeq 0,\\
&E_{\theta |\textbf{y}}\log \{\pi _{0}(\theta )\}-E_{\theta |\textbf{y}}\frac{1}{n}\log\{\pi (\theta )\}\simeq 0.
\end{eqnarray*}

Therefore, $E_{\tilde{y}}E_{\theta |\textbf{y}}\log \{g(\tilde{y}|\theta )\}$ can be
estimated by 
\begin{eqnarray*}
E_{\tilde{y}}E_{\theta |\textbf{y}}\log \{g(\tilde{y}|\theta )\} &=&E_{\tilde{y}%
}E_{\theta |\textbf{y}}\log \{g(\tilde{y}|\theta )\pi _{0}(\theta )\}-E_{\theta
|\textbf{y}}\log \{\pi _{0}(\theta )\} \\
&\simeq &E_{y}E_{\theta |\textbf{y}}\frac{1}{n}\log \{L(\theta |\textbf{y})\pi (\theta )\}-%
\frac{1}{2n}tr\{J_{n}^{-1}(\theta _{0})I(\theta _{0})\}+\frac{1}{2n}%
tr\{J_{n}^{-1}(\hat{\theta})J_{n}(\theta _{0})\} \\
&&-\frac{1}{2n}(tr\{J_{n}^{-1}(\theta _{0})I(\theta
_{0})J_{n}^{-1}(\theta _{0})J(\theta _{0})\}+tr\{J_{n}^{-1}(\hat{\theta}%
)J(\theta _{0})\}) \\
&&-\frac{1}{n}\log \pi (\theta _{0})+\log \pi _{0}(\theta _{0})-E_{\theta
|\textbf{y}}\log \{\pi _{0}(\theta )\} \\
&\simeq &E_{y}E_{\theta |\textbf{y}}\frac{1}{n}\log \{L(\theta |\textbf{y})\}-\frac{1}{2n}%
tr\{J_{n}^{-1}(\theta _{0})I(\theta _{0})\}+\frac{1}{2n}tr\{J_{n}^{-1}(%
\hat{\theta})J_{n}(\theta _{0})\} \\
&&-\frac{1}{2n}(tr\{J_{n}^{-1}(\theta _{0})I(\theta
_{0})J_{n}^{-1}(\theta _{0})J(\theta _{0})\}+tr\{J_{n}^{-1}(\hat{\theta}%
)J(\theta _{0})\})
\end{eqnarray*}

Replacing $\theta _{0}$ by $\hat{\theta}$, $J(\theta _{0})$ by $J_{n}(\hat{\theta})$ and $I(\theta _{0})$ by $I_{n}(\hat{\theta})$, we obtain $E_{\theta |\textbf{y}}\frac{1}{n}\log \{L(\theta |\textbf{y})\}-\frac{1}{n}tr\{J_{n}^{-1}(\hat{\theta})I_{n}(\hat{\theta})\}$ as an
asymptotically unbiased estimate for $E_{\tilde{y}}E_{\theta |\textbf{y}}\log \{g(%
\tilde{y}|\theta )\}$.
\end{proof}

With the above result, we propose a new predictive criterion for Bayesian modeling, the Posterior averaging information criterion (PAIC) as 
\begin{equation}
\textsc{PAIC} = -2\sum_i E_{\theta | \textbf{y}}[\log g(y_i| \theta )]+2 tr\{J_n^{-1}(\hat{\theta}) I_n(\hat{\theta})\}.
\end{equation}
The candidate models with small criterion values are preferred for the purpose of model selection.

The PAIC has several attractive properties. First, it assesses Bayesian model performance with respect to the posterior distribution function, which represents the current best knowledge from a Bayesian perspective in the family of candidate model. When the posterior distribution of the parameters is asymmetric, we expect it to better perform than any plug-in estimate based approaches. Secondly, it is an asymptotic unbiased estimator for the out-of-sample log-likelihood, a measure in terms of the Kullback-Leibler divergence for the similarity of the fitted model and the underlying true distribution.  
Thirdly, PAIC is derived free of the assumption that the approximating distributions contain the truth, indicating that PAIC is generally applicable even though some models could be mis-specified. Lastly, unlike the BPIC, the PAIC is well-defined and can cope with degenerate non-informative prior distributions for parameters. 


Several Bayesian researchers have also focused on the posterior averaged Kullback-Leibler discrepancy using cross-validation.  Plummer (2008) introduced the expected deviance penalized loss with `expected deviance' defined as $$L^e(y_i,z)= -2 E_{\theta| z} \log{g(y_i| \theta)},$$ which is a special case of the predictive discrepancy measure (Gelfand and Ghosh, 1998).  The standard cross-validation method can also be applied in this circumstance to estimate $\eta$, simply by considering the Kullback-Leibler discrepancy as the utility function of Vehtari and Lampinen (2002) and further investigated by Gelman et al. (2014).  The estimation of the bootstrap error correction $\eta ^{(b)}- \hat\eta ^{(b)}$ with bootstrap analogues $$\eta ^{(b)} =E_{\tilde{y^*}}[E_{\theta | y^*}\log g(\tilde{y}| \theta )]$$ and $$\hat \eta^{(b)} =E_{\tilde{y^*}}[n^{-1} E_{\theta | y^*}\log L(\theta | y^*)]$$ for $\eta - \hat\eta $ was discussed by Ando (2007) as a Bayesian adaptation of frequentist model selection (Konish and Kitagawa, 1996).  Although numeric approach such as importance sampling can be used for the intensive computation, one caveat is that it may cause inaccurate estimation in practice if some observation $y_i$ was influential (Vehtari and Lampinen, 2002).  To address that problem, Vehtari et al. (2017) proposed Pareto smoothed importance sampling, a new algorithm for regularizing importance weights, and developed a numerical tool (Vehtari et al., 2018) to facilitate computation.   Watanabe (2010) established a singular learning theory and proposed a new criterion named Watanabe-Akaike (Gelman et al., 2014), or widely applicable information criterion (WAIC; Watanabe 2008, 2009), while WAIC$_1$ was proposed for the plug-in discrepancy and WAIC$_2$ for the posterior averaged discrepancy.  However, compared with BPIC and PAIC, we found that WAIC$_2$ tends to have larger bias and variation for regular Bayesian models, as shown in simulation studies in the next section.



\section{Simulation Study}\label{sec4}


In this section, we present numerical results to study the behavior of the proposed method under small and moderate sample sizes in both Gaussian and non-Gaussian settings. In the simulation experiments, we estimate the true expected bias $\eta $ either analytically ($\mathcal{x}$\ref{simu1}) or numerically by
averaging $E_{\theta | \textbf{y}}[\log g(\tilde{y}| \theta )]$ over a large number of extra independent draws of $\tilde y$ when there is no closed form for
the integration ($\mathcal{x}$\ref{simu2}). To have BPIC well-defined for comparison, only the proper prior distributions are considered.

\subsection{A case with closed-form expression for bias estimators} \label{simu1}

Suppose observations $\textbf{y}=(y_{1},y_{2},...,y_{n})$ are a vector of iid samples generated from $N(\mu _T,\sigma_T ^{2})$, with unknown true mean $\mu _{T}$ and variance $\sigma_T ^{2}=1$.  Assume the data are analyzed by the approximating model $g(y_{i}| \mu )=N(\mu ,\sigma_A ^{2})$ with prior $\pi (\mu )=N(\mu _{0},\tau _{0}^{2})$, where $\sigma_A^2$ is fixed, but not necessarily equal to the true variance $\sigma_T^2$. When $\sigma_A^2 \neq \sigma_T^2$, the model is misspecified.

\begin{figure}[]
\center
\includegraphics[width=6in]{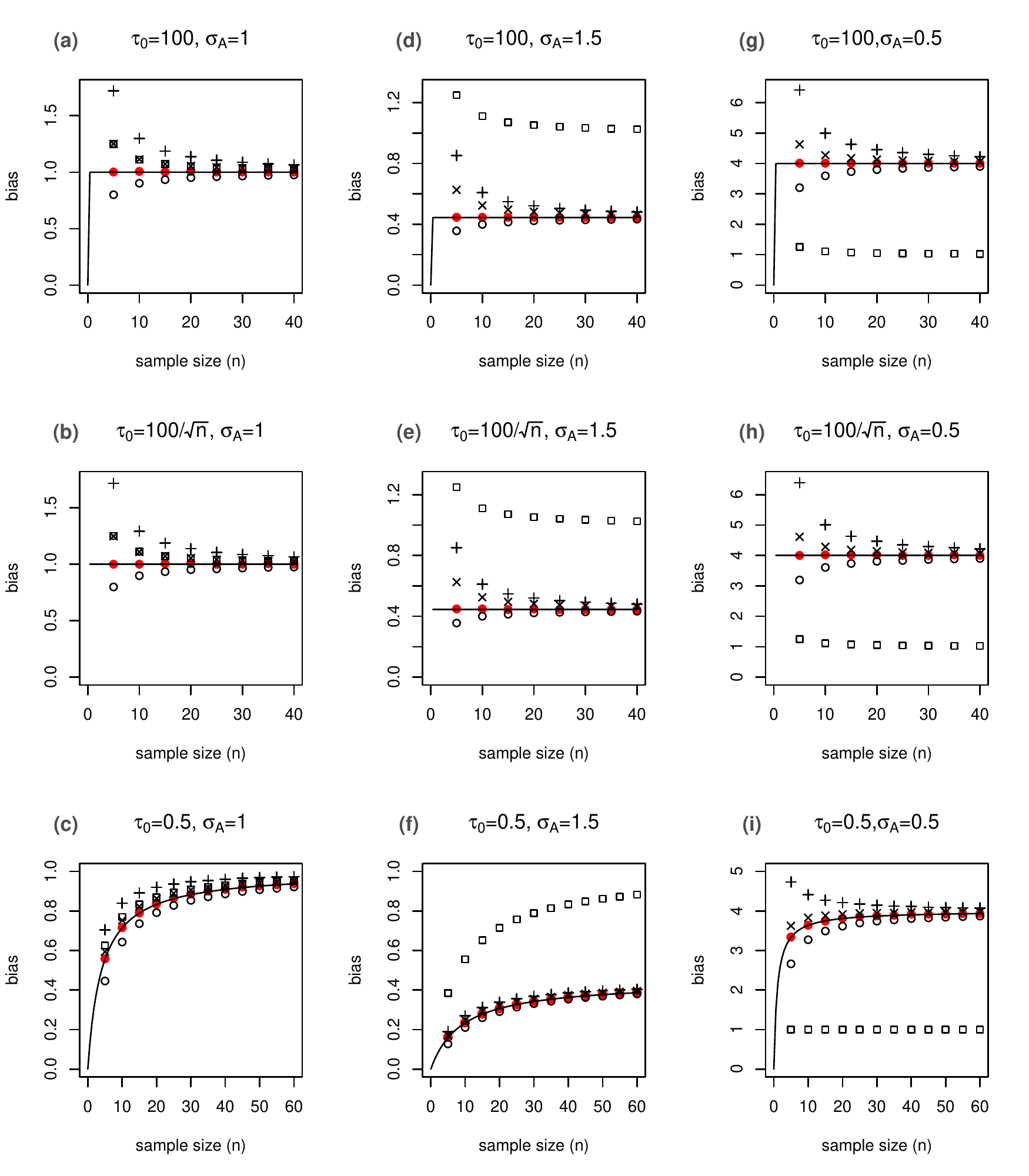}
\caption[Comparison of criteria with posterior averaging: $\sigma_A^2=\sigma_T ^2$]
{Performance of the bias estimators for $E_{y}(\hat{\eta}-\eta )$. The top panels are under a relatively non-informative prior with $\protect\tau_0^2=10^4$; the middle panels are under the case that the prior distribution grows with sample size with $\protect\tau_0^2=10^4/n$; the bottom panels are under an informative prior with $\protect\tau_0^2=0.25$. The left panels (a), (b) and (c) are under the scenario of $\sigma_A^2=\sigma_T ^2=1$, i.e., the true distribution is contained in the candidate models. The middle panels (d), (e) and (f) are under the scenario of $\sigma_A^2=2.25$ and right panels (g), (h) and (i) are under the scenario of $\sigma_A^2=0.25$ when the proposed model is misspecified from $\sigma_T ^2=1$. The true bias $b_\mu$ is curved by ( ----- ) as a function of sample size $n$. The averages of the different bias estimators are marked by ({\color{red}$\bullet$}) for PAIC; ($\circ$) for BPIC; (\footnotesize$\square$\normalsize) for $p_{opt}$; ($+$) for WAIC$_2$; and ($\times$) for cross-validation.  Each mark represents the mean of the estimated bias of 100,000 replications of $\textbf{y}$.  }
\label{fig:linearBias}
\end{figure}

The posterior distribution of $\mu$ is normally distributed with mean $\hat{\mu}$ and variance $\hat{\sigma}^{2}$, where 
\begin{eqnarray*}
\hat{\mu} &=&(\mu _{0}/\tau _{0}^{2}+\sum_{i=1}^{n}y_{i}/\sigma_A ^{2})/(1/\tau _{0}^{2}+n/\sigma_A ^{2}) \\
\hat{\sigma}^{2} &=&1/(1/\tau _{0}^{2}+n/\sigma_A ^{2}).
\end{eqnarray*}
Therefore, we obtain the Kullback-Leibler discrepancy function and its estimator as  
\begin{eqnarray*}
\eta &=&E_{\tilde{y}}[E_{\mu | \textbf{y}}[\log g(\tilde{y}| \mu )]]=-\frac{1}{2}\log (2\pi \sigma_A ^{2})-\frac{\sigma_T ^{2}+(\mu _{T}-\hat{\mu})^{2}+\hat{\sigma}^{2}}{2\sigma_A ^{2}} \\
\hat{\eta} &=&\frac{1}{n}\sum_{i=1}^{n}E_{\mu | \textbf{y}}[\log g(y_{i}| \mu )]]=-\frac{1}{2}\log (2\pi \sigma_A ^{2})-\frac{1}{n}\sum_{i=1}^{n}\frac{(y_{i}-\hat{\mu})^{2}+\hat{\sigma}^{2}}{2\sigma_A ^{2}}.
\end{eqnarray*}

To eliminate the estimation error caused by the sampling of the observations $\textbf{y}$, we average the bias $\hat{\eta}-\eta $ over $\textbf{y}$ with its true density $N(\mu_T,\sigma_T ^2)$, 
$$b_{\mu } =E_{\textbf{y}}(\hat{\eta}-\eta )=E_{\textbf{y}}\{\frac{\sigma_T^2}{2\sigma_A^2}+\frac{(\mu _{T}-\hat{\mu})^{2}}{2\sigma_A^2}-\frac{1}{n} \sum_{i=1}^{n}\frac{(y_{i}-\hat{\mu})^{2}}{2\sigma_A^2}\} = \sigma_T ^2 \hat{\sigma}^{2} / \sigma_A ^4.$$

Here, we compare the bias estimate defined in Theorem \ref{TM:PAIC}, $\hat{b}_\mu ^{PAIC} $ with 4 other bias estimators: 
$\hat{b}_\mu^{BPIC}$ (Ando, 2007), $\hat{b}_\mu^{WAIC_2}$ (Watanabe, 2009), $\hat{b}_\mu^{p_{opt}}$ (Plummer, 2008), and $\hat{b}_\mu^{CV}$ (Stone, 1974).
{
\begin{eqnarray*}
\hat{b}_\mu^{PAIC} &=&\frac{1}{n-1}\hat{\sigma}^{2}\sum _{i=1}^{n}((\mu _{0}-\hat{\mu})/(n\tau _{0}^{2})+(y_{i}-\hat{\mu})/\sigma_A ^{2})^{2} \\ 
\hat{b}_\mu^{BPIC} &=&\frac{1}{n}\hat{\sigma}^{2}\sum _{i=1}^{n}((\mu_{0}-\hat{\mu})/(n\tau _{0}^{2})+(y_{i}-\hat{\mu})/\sigma_A ^{2})^{2}\\
\hat{b}_\mu^{WAIC_2} &=&\frac{\hat\sigma^2}{\sigma_A^4} (n\hat\sigma^2/2 + \sum_{i=1}^n (y_{i}-\hat{\mu})^2) \\
\hat{b}_\mu^{p_{opt}} &=&\frac 1 {2n} p_{opt} = 1/(1/\tau_0^2 + (n-1)/\sigma_A^2) / \sigma_A^2 \\
\hat{b}_\mu^{CV} &=& \hat \eta - (\sum_{i=1}^n (y_i-(\mu_0/\tau_0^2 + \sum_{j\neq i} y_j/\sigma_A^2)/(1/\tau_0^2 + (n-1)/\sigma_A^2))^2/n +\hat{\sigma}^2) / \sigma_A^2 /2. 
\end{eqnarray*}
}
The results are in accordance with the theory. All of the estimates are close to the true bias-correction values when the model is well-specified with $\sigma_A^2=\sigma_T^2=1$, especially when the sample size becomes moderately large (Figure \ref{fig:linearBias}, panels (a), (b) and (c)).  The estimated values based on the PAIC are consistently closer to the true values than either those based on Ando's method, which underestimates the bias, or the WAIC$_2$, cross-validation or expected deviance penalized loss, which overestimate the bias, especially when the sample size is small. When the models are misspecified, it is not surprising that in all of the plots given in panels (d)-(i) of Figure \ref{fig:linearBias}, only the expected deviance penalized loss misses the target even asymptotically since its assumption is violated, whereas all the other approaches converge to $b_{\mu }$.  In summary, PAIC achieves the best overall performance.

\subsection{Bayesian logistic regression}\label{simu2}
Consider frequencies $\textbf{y}=\{y_1,\ldots ,y_N\}$, which are independent observations from binomial distributions with respective true probabilities $\xi _1^T,\ldots ,\xi _N^T$, and sample sizes, $n_{1},\ldots ,n_{N}$.  To draw inference of the $\xi$'s, we assume that the logits 
\begin{equation*}
\beta_i=\logit(\xi_i)=\log \frac{\xi_i}{1-\xi _{i}}
\end{equation*}
are random effects that follow the normal distribution $\beta_{i}\sim N(\mu, \tau^2).$
The weakly-informative joint prior distribution $N(\mu; 0, 1000^2) \cdot Inv$-$\chi^2(\tau^2; 0.1, 10)$ is proposed on the hyper-parameter $(\mu ,\tau^2)$ so that the BPIC is properly defined and computable.  The posterior distribution is asymmetric due to the logistic transformation. 

We compare the performance of four asymptotically unbiased bias estimators in this hierarchical, asymmetric setting.  The true bias $\eta$ does not have an analytical form.  We estimate it through numerical computation using independent simulation from the same data generating process, assuming the underlying true values of $\mu=0$ and $\tau=1$. The simulation scheme is as follows:
\begin{enumerate}
	\item Draw $\beta_{T,i} \sim N(0,1)$, $y_i \sim Bin(n_i,\logit^{-1}(\beta_{T,i}))$, $i=1,\ldots,N$ from the true distribution. 
	\item Simulate the posterior draws of $(\beta, \mu, \tau)| \textbf{y}$.
	\item Estimate $\hat{b}_\beta^{PAIC}$, $\hat{b}_\beta^{BPIC}$, $\hat{b}_\beta^{WAIC_2}$ and $\hat{b}_\beta^{CV}$. 
	\item Draw $\textbf{z}^{(j)} \sim Bin(n,\logit^{-1}(\beta_0^T))$,  $j=1,\ldots,J$, for approximation of true $\eta$.
	\item Compare each $\hat{b}_\beta$ with true bias $b_\beta = \hat{\eta}-\eta$. 
 	\item Repeat steps 1-5.
\end{enumerate}
\vspace{-.1in}

\begin{table}[h]
 \caption[Criteria comparison for Bayesian logistic regression]{ The estimation error of bias correction: the mean and standard deviation (in parentheses) from 1000 replications.}
\label{table:LogitBias}
\begin{center}
\begin{small}
\begin{sc}
\vskip -0.2in
\begin{tabular}{ccccc}
\toprule
Criterion & Actual Error  & Mean Absolute Error & Mean Square Error \\
& $\hat{\eta}-\eta-\hat{b}_{\beta}$ & $\left| {\hat{\eta}-\eta-\hat{b}_{\beta}} \right| $ & $(\hat{\eta}-\eta-\hat{b}_{\beta} )^{2}$\\
\midrule
${PAIC}$   & \textbf{0.160 (0.238)} & \textbf{0.206 (0.199)} & \textbf{0.082 (0.207)} \\
${BPIC}$    &   0.259 (0.244)  &   0.272 (0.229)  &   0.127 (0.267) \\ 
$CV$    &   0.840 (0.285)  &   0.840 (0.285)  &   0.786 (0.633) \\
$WAIC_2$    &   0.511 (0.248)  &   0.511 (0.248)  &   0.323 (0.389) \\
\bottomrule
\end{tabular}
\end{sc}
\end{small}
\end{center}
\vskip -0.1in
\end{table}

Table \ref{table:LogitBias} summarizes the bias and standard deviation of the estimation error when we choose $N=15$ and $n_{1}=\ldots =n_{N}=50$, and the $\beta$'s are independently simulated from the standard normal distribution assuming the true hyper-parameter mean $\mu=0$ and variance $\tau^2=1$.  The simulation is repeated for $1,000$ scenarios, each with $J=20,000$ for out-of-sample $\eta$ estimation.  PAIC and BPIC were calculated based on definition; leave-one-out cross-validation and WAIC$_2$ were estimated using \textit{R} package \textit{loo} v2.0.0.  The actual error, mean absolute error and mean square error were considered to assess the estimation error using the bias correction estimates.  With respect to all three different metrics, the bias estimation of PAIC is consistently superior to other methods.  
Compared to BPIC, the second best performed model selection criterion, the bias and the mean squared error of PAIC are reduced by about $40\%$, while the absolute bias is reduced by about one quarter, which matches our expectation that the natural estimate $\frac 1 n \sum_i E_{\theta | \textbf{y}}[\log g(y_i| \theta )]$ will estimate the posterior averaged Kullback-Leibler discrepancy more precisely than plug-in estimate $\frac 1 n \sum_i \log g(y_i| \hat\theta )$ when the posterior distribution is asymmetric and correlated.  Compared to WAIC$_2$, the bias, absolute error and mean square error of PAIC are dramatically reduced by at least $60\%$.  In practice, we expect the improvement is even larger when proposed models were more complicated in terms of hierarchical structures.

\section{Discussions and concluding remarks}
\label{sec6}

The Kullback-Leilber divergence is a non-symmetric measure of the difference between two probability distributions.  Frequentist statistics theoretically employing Kullback-Leibler divergence into parametric model selection emerged during the 1970s.  Since then, the development of related theory and applications has rapidly accelerated.

Bayesian model selection in terms of the Kullback-Leibler divergence has drawn substantial attention in the past two decades.  The availability of both fast computers and advanced numerical methods enables the empirical popularity of Bayesian modeling, which allows for additional flexibility to incorporate the information out of the data, as represented by the prior distribution.  The fundamental assumption of Bayesian inference is different from frequentist statistics, for the unknown parameters are treated as random variables in the form of a probability distribution.  Taking this into account, it is important to have selection techniques that are specifically designed for Bayesian modeling.  

Before the proposal of any specific model selection criterion, two questions should be first investigated to guide the method development. 1. What is a natural Kullback-Leibler discrepancy to evaluate a Bayesian model? 2. What is a good estimate for Kullback-Leibler discrepancy for Bayesian model?  The prevailing plug-in parameter methods, such as DIC, presume the candidate models are correct, and assess the goodness of each candidate model with a density function specified by the plug-in parameters.  However, from a Bayesian perspective, it is inherent to examine the performance of a Bayesian model over the entire posterior distribution, as stated by Celeux et al. (2006, p.703): ``... we concede that using a plug-in estimate disqualifies the technique from being properly Bayesian.''  Accordingly, statistical approaches to estimate the Kullback-Leibler discrepancy as evaluated by averaging over the posterior distribution are of great interest.

We have proposed PAIC, a versatile model selection technique for Bayesian models under regularity assumptions, to address this problem.  From a predictive perspective, we consider the asymptotic unbiased estimation of a Kullback-Leibler discrepancy, which averages the conditional density of the observable data against the posterior knowledge about the unobservable data.  Empirically, the proposed PAIC measures the similarity of the fitted model and the underlying true distribution, regardless of whether or not the approximating distribution family contains the true model.  
The range of applications of the proposed criterion can be quite broad.

PAIC and BPIC (Ando, 2007) are similar in many aspects.  In addition to all the good properties both methods share, PAIC has some special features, mainly because of the natural log-likelihood estimator.  For example, PAIC can be well applied even if the prior distribution of the parameters degenerates, in which case BPIC becomes uninterpretable.  A non-informative prior appears quite often in practice.  When the posterior distribution is asymmetric or parameters were correlated, our method provides a better bias estimation than that obtained by using the natural estimator, which evaluates the log-likelihood over the posterior distribution instead of over some specific point.  In numerical experiments to compare the performance of the proposed criteria with other Bayesian model selection criteria including BPIC and WAIC$_2$, PAIC has the smallest bias and variance to estimate the posterior averaged discrepancy. Even for data obtained from small sample sizes, the bias correction of PAIC still achieves better performance.

There are some future directions of the current work.  A more comprehensive comparison of Bayesian predictive methods for model selection can be investigated by taking into account the likely over-fitting in the selection phase, similar to Piironen and Vehtari (2017).  Because it's inconvenient that the users of PAIC and BPIC have to specify the first and second derivatives of the posterior distribution in their modeling, development of advanced computational tools for simultaneous calculation are really in need.  In singular learning machines, the regularity conditions can be relaxed to singular in a sense that the mapping from parameters to probability distributions is not necessarily one-to-one.  Although here we focused on only the regular models, it is also possible to generalize PAIC to singular settings with a modified bias correction term, after an algebraic geometrical transformation of the singular parameter space to a real $d$-dimensional manifold.

\bigskip
\begin{center}
{\large\bf SUPPLEMENTAL MATERIALS}
\end{center}

\section*{Lemmas for Proof of Theorem \ref{TM:PAIC}}

\subsection*{Some important notations}

 By the law of large numbers we have $\frac{1}{n}%
\log \{L(\theta |\textbf{y})\pi (\theta )\}\rightarrow E_{\tilde{y}}[\log \{g(\tilde{y%
}|\theta )\pi _{0}(\theta )\}]$ as $n$ tends to infinity. Denote $\theta
_{0} $, $\hat{\theta}$ the expected and empirical posterior mode of the log
unnormalized posterior density $\log \{L(\theta |\textbf{y})\pi (\theta )\}$, $i.e.$, 
\begin{eqnarray*}
\theta _{0} &=&\arg \max_{\theta }E_{\tilde{y}}[\log \{g(\tilde{y}|\theta
)\pi _{0}(\theta )\}] \\
\hat{\theta} &=&\arg \max_{\theta }\frac{1}{n}\log \{L(\theta |\textbf{y})\pi (\theta
)\},
\end{eqnarray*}
and let $I(\theta )$ and $J(\theta )$ denote the Bayesian Hessian matrix and Bayesian Fisher information matrix
\begin{equation*}
I(\theta )=E_{\tilde{y}}\left(\frac{\partial \log \{g(\tilde{y}|\theta )\pi
_{0}(\theta )\}}{\partial \theta }\frac{\partial \log \{g(\tilde{y}|\theta
)\pi _{0}(\theta )\}}{\partial \theta ^{\prime }}\right)
\end{equation*}%
and 
\begin{equation*}
J(\theta )=-E_{\tilde{y}}\left(\frac{\partial ^{2}\log \{g(\tilde{y}|\theta )\pi
_{0}(\theta )\}}{\partial \theta \partial \theta ^{\prime }}\right).
\end{equation*}

\subsection*{Proof of Lemmas}

We start with a few lemmas to support the proofs of Theorem \ref{TM:PAIC}.

\begin{lemma}\label{1a}
Under the same regularity conditions of Theorem \ref{TM:PAIC}, $\sqrt{n}(\hat{\theta}-\theta
_{0})$ is asymptotically distributed as $N(0,J_{n}^{-1}(\theta_{0})I(\theta _{0})J_{n}^{-1}(\theta _{0}))$.
\end{lemma}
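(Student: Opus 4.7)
The plan is to derive a sandwich-type normal limit by treating the posterior mode $\hat\theta$ as an M-estimator that maximizes the scaled log unnormalized posterior density
\[
\ell_n(\theta) \;=\; \frac{1}{n}\log\{L(\theta|\textbf{y})\pi(\theta)\} \;=\; \frac{1}{n}\sum_{i=1}^n \log g(y_i|\theta) + \frac{1}{n}\log\pi(\theta),
\]
whose population analogue is $\ell(\theta) = E_{\tilde y}[\log\{g(\tilde y|\theta)\pi_0(\theta)\}]$. By C2, $\theta_0$ is the unique maximizer of $\ell$, and by uniform convergence on the compact $\Theta$ (via C1 and the law of large numbers) I would first argue consistency $\hat\theta \to_p \theta_0$ in the standard argmax--continuous-mapping manner.

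Next I would exploit the first-order condition $\nabla_\theta \ell_n(\hat\theta)=0$ and expand the gradient about $\theta_0$ via a mean-value Taylor expansion permitted by C1:
\[
0 \;=\; \nabla_\theta \ell_n(\theta_0) \;+\; \nabla^2_\theta \ell_n(\theta^*)\,(\hat\theta-\theta_0),
\]
for some $\theta^*$ on the segment between $\hat\theta$ and $\theta_0$. In the notation of Theorem \ref{TM:PAIC}, $-\nabla^2_\theta\ell_n(\theta)=J_n(\theta)$, so the identity rearranges to $\sqrt n\,(\hat\theta-\theta_0)=J_n(\theta^*)^{-1}\sqrt n\,\nabla_\theta \ell_n(\theta_0)$, the inverse existing eventually by C3 and consistency of $\theta^*$.

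For the score at $\theta_0$, I would write $\sqrt n\,\nabla_\theta \ell_n(\theta_0) = \frac{1}{\sqrt n}\sum_{i=1}^n \nabla_\theta\log g(y_i|\theta_0) + \frac{1}{\sqrt n}\nabla_\theta\log\pi(\theta_0)$. The prior piece is $O(n^{-1/2})$ and vanishes, while the first piece is a sum of i.i.d.\ random vectors whose common mean equals $\partial \ell(\theta)/\partial\theta|_{\theta_0}=0$ by C2. A standard multivariate CLT then gives $\sqrt n\,\nabla_\theta \ell_n(\theta_0) \Rightarrow N(0,I(\theta_0))$, and combining this with $J_n(\theta^*)\to_p J(\theta_0)$ (by continuity of the Hessian and consistency of $\theta^*$) through Slutsky's theorem produces the claimed normal limit with sandwich variance $J_n^{-1}(\theta_0)\,I(\theta_0)\,J_n^{-1}(\theta_0)$; one could equivalently substitute $J(\theta_0)$ for $J_n(\theta_0)$ in the stated limit, but keeping the empirical Hessian better preserves finite-sample accuracy.

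The main obstacle I anticipate is bookkeeping around the prior's contribution: the factor $\pi^{1/n}(\theta)$ folded into the definitions of $J_n$ and $I_n$ enters the score and Hessian at a vanishing rate, yet must be carried through carefully so that (a) the expected score at $\theta_0$ really vanishes to the required order, and (b) the asymptotic variance features $I(\theta_0)$ as defined using $\pi_0=\lim n^{-1}\log\pi$ rather than an unnormalised alternative. This is exactly the spot where a sloppy argument could silently introduce an $O(n^{-1/2})$ bias. A secondary concern is ensuring $\theta_0$ lies in the interior of $\Theta$ so that the stationarity equation genuinely applies; compactness from C1 together with uniqueness from C2 and non-singular curvature from C3 should suffice, but I would flag the interior-point assumption explicitly rather than leave it implicit.
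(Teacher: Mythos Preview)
Your proposal is correct and follows essentially the same route as the paper: expand the score at $\hat\theta$ about $\theta_0$, use the first-order condition to solve for $\sqrt n(\hat\theta-\theta_0)$, and apply the CLT to the score at $\theta_0$ to obtain the sandwich limit. Your treatment is in fact more careful than the paper's terse argument---the explicit consistency step, the mean-value form with $\theta^*$, and the flagged bookkeeping around the prior contribution (which is indeed the only subtle point, and the paper itself glosses over it with a ``when $E_y\,\partial\log\{L\pi\}/\partial\theta|_{\theta_0}\to 0$'' proviso) all match or exceed what the paper provides.
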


\begin{proof}
Consider the Taylor expansion of $\frac{\partial \log
\{L(\theta |\textbf{y})\pi (\theta )\}}{\partial \theta }|_{\theta =\hat{\theta}}$ at 
$\theta _{0}$, 
\begin{eqnarray}
\frac{\partial \log \{L(\theta |\textbf{y})\pi (\theta )\}}{\partial \theta }|_{\theta =\hat{\theta}} &\simeq &\frac{\partial \log \{L(\theta |\textbf{y})\pi
(\theta )\}}{\partial \theta }|_{\theta =\theta _{0}}+\frac{\partial^{2}\log \{L(\theta |\textbf{y})\pi (\theta )\}}{\partial \theta \partial \theta^{\prime }}|_{\theta =\theta _{0}}(\hat{\theta}-\theta _{0}) \nonumber \\
&=&\frac{\partial \log \{L(\theta |\textbf{y})\pi (\theta )\}}{\partial \theta }|_{\theta =\theta _{0}}-nJ_{n}(\theta _{0})(\hat{\theta}-\theta _{0}). \nonumber
\end{eqnarray}%
Note that $\hat{\theta}$ is the mode of $\log \{L(\textbf{y}|\theta )\pi (\theta )\}$ and satisfies $\frac{\partial \log \{L(\textbf{y}|\theta )\pi (\theta )\}}{\partial \theta }|_{\theta =\hat{\theta}}=0$. Plug it into the above equation, we have 
\begin{equation}
nJ_{n}(\theta _{0})(\hat{\theta} - \theta _{0})\simeq \frac{\partial \log
\{L(\theta |\textbf{y})\pi (\theta )\}}{\partial \theta }|_{\theta =\theta _{0}}. \label{A1}
\end{equation}
From the central limit theorem, the right-hand-side (RHS) of the equation (\ref{A1}) is approximately
distributed as $N(0,nI(\theta _{0}))$ when $E_{y}\frac{\partial \log
\{L(\theta |\textbf{y})\pi (\theta )\}}{\partial \theta }|_{\theta =\theta _{0}}\rightarrow 0$.
Therefore 
\begin{equation*}
\sqrt{n}(\hat{\theta}-\theta _{0})\sim N(0,J_{n}^{-1}(\theta
_{0})I(\theta _{0})J_{n}^{-1}(\theta _{0})).
\end{equation*} 
\end{proof}

\begin{lemma}\label{1b}
Under the same regularity conditions of Theorem \ref{TM:PAIC}, $\sqrt{n}(\theta -\hat{\theta})\sim N(0,J_{n}^{-1}(\hat{%
\theta}))$.
\end{lemma}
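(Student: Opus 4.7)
The plan is to establish Lemma \ref{1b} via a Laplace-style expansion of the log unnormalized posterior density around the posterior mode $\hat\theta$, in the spirit of the standard Bayesian central limit theorem. The quantity $\theta$ here is the random parameter whose distribution (conditional on the data $\textbf{y}$) is the posterior $\propto \pi(\theta) L(\theta|\textbf{y})$, while $\hat\theta$ is fixed given $\textbf{y}$. The goal is to show that, up to a vanishing remainder, the posterior is approximately Gaussian with mean $\hat\theta$ and covariance $n^{-1} J_n^{-1}(\hat\theta)$.

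First I would apply Taylor's theorem under condition C1 to write
\begin{equation*}
\log\{L(\theta|\textbf{y})\pi(\theta)\} = \log\{L(\hat\theta|\textbf{y})\pi(\hat\theta)\} + (\theta-\hat\theta)^{\prime}\frac{\partial \log\{L(\theta|\textbf{y})\pi(\theta)\}}{\partial\theta}\bigg|_{\theta=\hat\theta} - \frac{n}{2}(\theta-\hat\theta)^{\prime} J_n(\hat\theta)(\theta-\hat\theta) + R_n,
\end{equation*}
where $R_n$ collects the cubic and higher-order remainder. The linear term vanishes because $\hat\theta$ is by definition the posterior mode, and conditions C2, C3 guarantee that $J_n(\hat\theta)$ is positive definite with high probability for large $n$ (this follows by combining Lemma \ref{1a} with continuity of the Hessian). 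Exponentiating, the unnormalized posterior density is proportional to
\begin{equation*}
\exp\!\left\{-\tfrac{n}{2}(\theta-\hat\theta)^{\prime} J_n(\hat\theta)(\theta-\hat\theta) + R_n\right\},
\end{equation*}
which, after a change of variables $u = \sqrt{n}(\theta - \hat\theta)$, is essentially the kernel of an $N(0, J_n^{-1}(\hat\theta))$ density on the $u$-scale.

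The remaining step is to justify that $R_n$ is negligible. On the shrinking neighbourhood where $\|\theta-\hat\theta\|=O_p(n^{-1/2})$, the cubic term is of order $n \cdot n^{-3/2} = n^{-1/2}$ and hence $o_p(1)$, so within the bulk of the posterior mass the Gaussian approximation is uniformly valid. Outside that neighbourhood, strict concavity of the limiting target at $\theta_0$ (from C2 and C3) together with the compactness of $\Theta$ (from C1) yields an exponentially decaying tail bound, so the contribution of the tail to the normalizing constant is $o(1)$. Combining these two pieces, the posterior of $\sqrt{n}(\theta-\hat\theta)$ converges in total variation (hence in distribution) to $N(0, J_n^{-1}(\hat\theta))$, as claimed.

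The main obstacle is the uniform control of the remainder $R_n$ across the full parameter space: the local quadratic approximation is the easy part, but one needs the tail bound to rule out posterior mass escaping to the boundary of $\Theta$. This is where compactness of $\Theta$ and non-singularity of $J(\theta_0)$ in conditions C1 and C3 do the real work, mirroring exactly the argument already used implicitly for Lemma \ref{1a}; once those hold, the remaining algebra is routine Laplace approximation.
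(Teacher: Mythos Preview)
Your proposal is correct and follows essentially the same route as the paper: a second-order Taylor (Laplace) expansion of $\log\{L(\theta|\textbf{y})\pi(\theta)\}$ around the posterior mode $\hat\theta$, recognising the resulting quadratic form as the kernel of $N(\hat\theta, n^{-1}J_n^{-1}(\hat\theta))$. The paper's own argument is in fact briefer than yours---it simply writes the quadratic expansion and reads off the normal approximation without explicitly controlling the remainder or the tails---so your discussion of $R_n$ and the use of compactness for the tail bound goes beyond what the paper supplies; the paper also remarks that the result can alternatively be obtained from the Bernstein--von Mises theorem.
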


\begin{proof}
Taylor-expand the logarithm of $L(\theta |\textbf{y})\pi (\theta )$\
around the posterior mode $\hat{\theta}$%
\begin{equation}
\log L(\theta |\textbf{y})\pi (\theta )=\log L(\hat{\theta}|\textbf{y})\pi (\hat{\theta})-%
\frac{1}{2}(\theta -\hat{\theta})^{\prime }\frac{1}{n}J_{n}^{-1}(\hat{\theta})(\theta -\hat{\theta})+o_p(n^{-1}) \label{A2}
\end{equation}%
where $J_{n}(\hat{\theta})=-\frac{1}{n}\frac{\partial ^{2}\log \{L(\theta
|\textbf{y})\pi (\theta )\}}{\partial \theta \partial \theta ^{\prime }}|_{\theta =%
\hat{\theta}}$.

Consider the RHS of equation (\ref{A2}) as a function of $\theta $: the first term  is a constant,
whereas the second term is proportional to the logarithm of a normal density.  It yields the approximation of the posterior distribution for $\theta$: 
\begin{equation*}
p(\theta |\textbf{y})\approx N(\hat{\theta},\frac{1}{n}J_{n}^{-1}(\hat{\theta})),
\end{equation*}
which completes the proof. 

Alternatively, though less intuitive, this lemma can also be proved by applying the Berstein-Von Mises theorem.
\end{proof}

\begin{lemma}\label{1c}
Under the same regularity conditions of Theorem \ref{TM:PAIC}, $E_{\theta |\textbf{y}}(\theta _{0}-\hat{\theta})(\hat{\theta}%
-\theta )^{\prime }=o_{p}(n^{-1})$.
\end{lemma}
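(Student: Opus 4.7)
The plan is to exploit the fact that the factor $\theta_0 - \hat\theta$ is a function only of the data and of the deterministic quantity $\theta_0$; in particular, it is constant with respect to the posterior measure on $\theta$. So I would pull it outside the posterior expectation and write
\begin{equation*}
E_{\theta|\textbf{y}}(\theta_0 - \hat\theta)(\hat\theta - \theta)' = (\theta_0 - \hat\theta)\,E_{\theta|\textbf{y}}(\hat\theta - \theta)'.
\end{equation*}
This immediately reduces the lemma to two separate order-of-magnitude statements which can be combined multiplicatively.

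The first factor is handled by Lemma \ref{1a}, which gives $\hat\theta - \theta_0 = O_p(n^{-1/2})$ under the stated regularity conditions. For the second factor, the natural starting point is Lemma \ref{1b}, which asserts that $\sqrt{n}(\theta - \hat\theta)\mid \textbf{y}$ is asymptotically $N(0, J_n^{-1}(\hat\theta))$, so that to leading order $E_{\theta|\textbf{y}}(\theta - \hat\theta) = 0$. Since we need an $o_p(n^{-1})$ conclusion on the product, and the first factor already contributes $O_p(n^{-1/2})$, it suffices to show that the posterior-mean remainder is $o_p(n^{-1/2})$, and in fact a sharper $O_p(n^{-1})$ bound is what one expects.

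To get that sharper rate I would refine the Taylor expansion of $\log\{L(\theta|\textbf{y})\pi(\theta)\}$ around $\hat\theta$ used in Lemma \ref{1b} by keeping the cubic term, which is $O_p(n^{-1/2})$ relative to the quadratic. The Laplace-type expansion of the posterior then gives a skewness correction of order $n^{-1}$ to the posterior mean, so that $E_{\theta|\textbf{y}}(\theta - \hat\theta) = O_p(n^{-1})$. Combined with $\hat\theta - \theta_0 = O_p(n^{-1/2})$, the product is $O_p(n^{-3/2}) = o_p(n^{-1})$, as claimed.

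The main obstacle is the second factor: Lemma \ref{1b} as stated only gives a first-order normal (Bernstein--von Mises) approximation, which is not by itself enough to control $E_{\theta|\textbf{y}}(\theta - \hat\theta)$ beyond $o_p(1)$. One needs to invoke (or derive) a higher-order posterior expansion, either by extending the Taylor expansion in Lemma \ref{1b} to third order under condition C1, or by appealing to a standard higher-order Laplace/Bernstein--von Mises result. Once that refinement is in hand, the remainder of the argument is a one-line multiplication of rates.
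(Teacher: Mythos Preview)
Your overall strategy---pulling $(\theta_0-\hat\theta)$ outside the posterior expectation and then bounding the two factors separately---is exactly what the paper does. The difference lies in how you control $E_{\theta|\textbf{y}}(\hat\theta-\theta)$. You propose a third-order Taylor/Laplace refinement of Lemma~\ref{1b} to extract an $O_p(n^{-1})$ skewness correction to the posterior mean. The paper instead avoids any higher-order expansion by using an exact identity: from a first-order expansion of the score around $\hat\theta$ it writes $\hat\theta-\theta = n^{-1}J_n^{-1}(\hat\theta)\,\partial\log\{L(\theta|\textbf{y})\pi(\theta)\}/\partial\theta + O_p(n^{-1})$, and then observes that
\[
E_{\theta|\textbf{y}}\Bigl[\frac{\partial\log\{L(\theta|\textbf{y})\pi(\theta)\}}{\partial\theta}\Bigr]
=\frac{1}{p(\textbf{y})}\int\frac{\partial\{L(\theta|\textbf{y})\pi(\theta)\}}{\partial\theta}\,d\theta
=\frac{1}{p(\textbf{y})}\,\frac{\partial}{\partial\theta}\int L(\theta|\textbf{y})\pi(\theta)\,d\theta=0,
\]
so the leading term vanishes \emph{exactly} and only the $O_p(n^{-1})$ Taylor remainder survives. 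This buys two things over your route: it stays within the twice-differentiability assumption of condition~C1 (your cubic expansion would strictly need third derivatives), and it sidesteps the machinery of higher-order Bernstein--von~Mises or Laplace expansions entirely. Your approach would work under slightly stronger smoothness, but the posterior score identity is the cleaner device here and is worth knowing.
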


\begin{proof}
First we have 
\begin{equation*}
\frac{\partial \log \{L(\theta |\textbf{y})\pi (\theta )\}}{\partial \theta }=\frac{%
\partial \log \{L(\theta |\textbf{y})\pi (\theta )\}}{\partial \theta }|_{\theta =%
\hat{\theta}}-nJ_{n}(\hat{\theta})(\theta -\hat{\theta})+O_{p}(1).
\end{equation*}

Since $\hat{\theta}$ is the mode of $\log \{L(\theta |\textbf{y})\pi (\theta )\}$, it
satifies $\frac{\partial \log \{L(\theta |\textbf{y})\pi (\theta )\}}{\partial \theta 
}|_{\theta =\hat{\theta}}=0$. Therefore $(\hat{\theta}-\theta
)=n^{-1}J_{n}^{-1}(\hat{\theta})\frac{\partial \log \{L(\theta |\textbf{y})\pi
(\theta )\}}{\partial \theta }+O_{p}(n^{-1})$. Note that 
\begin{eqnarray}
E_{\theta |\textbf{y}}\frac{\partial \log \{L(\theta |\textbf{y})\pi (\theta )\}}{\partial
\theta } &=&\int \frac{\partial \log \{L(\theta |\textbf{y})\pi (\theta )\}}{\partial
\theta }\frac{L(\theta |\textbf{y})\pi (\theta )}{p(\textbf{y})}d\theta  \nonumber \\
&=&\int \frac{1}{L(\theta |\textbf{y})\pi (\theta )}\frac{\partial \{L(\theta |\textbf{y})\pi
(\theta )\}}{\partial \theta }\frac{L(\theta |\textbf{y})\pi (\theta )}{p(\textbf{y})}d\theta
\nonumber \\
&=&\frac{1}{p(\textbf{y})} \int \frac{\partial \{L(\theta |\textbf{y})\pi(\theta )\}}{\partial \theta }d\theta
\nonumber \\
&=&\frac{1}{p(\textbf{y})} \frac{\partial }{\partial \theta }\int L(\theta |\textbf{y})\pi (\theta )d\theta =\frac{\partial }{\partial \theta }1=0. \nonumber 
\end{eqnarray}%
Because of assumption (C1), the equation holds when we change the order of the
integral and derivative. Therefore 
\begin{equation*}
E_{\theta |\textbf{y}}(\hat{\theta}-\theta )=n^{-1}J_{n}^{-1}(\hat{\theta})E_{\theta
|\textbf{y}}\frac{\partial \log \{L(\theta |\textbf{y})\pi (\theta )\}}{\partial \theta }%
+O_{p}(n^{-1})=O_{p}(n^{-1}).
\end{equation*}%
Together with $\theta _{0}-\hat{\theta}=$ $O_{p}(n^{-1/2})$ derived from
Lemma \ref{1a}, we complete the proof. 
\end{proof}

\begin{lemma}\label{1d} 
Under the same regularity conditions of Theorem \ref{TM:PAIC}, $E_{\theta |\textbf{y}}(\theta _{0}-\theta )(\theta _{0}-\theta
)^{\prime }=\frac{1}{n}J_{n}^{-1}(\hat{\theta})+\frac{1}{n}J_{n}^{-1}(\theta
_{0})I(\theta _{0})J_{n}^{-1}(\theta _{0})+o_{p}(n^{-1}).$
\end{lemma}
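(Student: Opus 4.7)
The plan is to expand $\theta-\theta_{0}$ around $\hat{\theta}$ and reduce the target expectation to quantities already controlled by Lemmas \ref{1a}--\ref{1c}. Specifically, I would start from the identity
\begin{equation*}
\theta-\theta_{0}=(\theta-\hat{\theta})+(\hat{\theta}-\theta_{0}),
\end{equation*}
so that the outer product decomposes into four pieces:
\begin{equation*}
(\theta-\theta_{0})(\theta-\theta_{0})'=(\theta-\hat{\theta})(\theta-\hat{\theta})'+(\theta-\hat{\theta})(\hat{\theta}-\theta_{0})'+(\hat{\theta}-\theta_{0})(\theta-\hat{\theta})'+(\hat{\theta}-\theta_{0})(\hat{\theta}-\theta_{0})'.
\end{equation*}

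Next I would apply $E_{\theta|\textbf{y}}$ term by term, noting that $\hat{\theta}-\theta_{0}$ is $\textbf{y}$-measurable and therefore pulls out of the posterior expectation as a constant. For the leading piece, Lemma \ref{1b} gives $\sqrt{n}(\theta-\hat{\theta})\sim N(0,J_{n}^{-1}(\hat{\theta}))$ asymptotically, so $E_{\theta|\textbf{y}}(\theta-\hat{\theta})(\theta-\hat{\theta})'=n^{-1}J_{n}^{-1}(\hat{\theta})+o_{p}(n^{-1})$. For the two cross terms, Lemma \ref{1c} (or more precisely the intermediate estimate $E_{\theta|\textbf{y}}(\hat{\theta}-\theta)=O_{p}(n^{-1})$ obtained in its proof) combined with $\hat{\theta}-\theta_{0}=O_{p}(n^{-1/2})$ from Lemma \ref{1a} makes each cross term of order $O_{p}(n^{-3/2})=o_{p}(n^{-1})$. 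Finally, for the quadratic piece in $\hat{\theta}-\theta_{0}$, Lemma \ref{1a} gives $(\hat{\theta}-\theta_{0})(\hat{\theta}-\theta_{0})'\cong n^{-1}J_{n}^{-1}(\theta_{0})I(\theta_{0})J_{n}^{-1}(\theta_{0})$ asymptotically. Summing the four contributions yields the claimed expression.

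The main obstacle I anticipate is the last step: $(\hat{\theta}-\theta_{0})(\hat{\theta}-\theta_{0})'$ is a random (in $\textbf{y}$) outer product of order $O_{p}(n^{-1})$, and it is not literally equal to its asymptotic covariance $n^{-1}J_{n}^{-1}(\theta_{0})I(\theta_{0})J_{n}^{-1}(\theta_{0})$ for any fixed sample. To make the $\cong$ in the lemma rigorous, I would reinterpret the statement as equality up to $o_{p}(n^{-1})$ after invoking the standard convention (consistent with the use of $\simeq$ elsewhere in the paper) that a random matrix of order $O_{p}(n^{-1})$ whose rescaled version converges in distribution to a zero-mean Gaussian with covariance $\Sigma$ may be replaced by $n^{-1}\Sigma$ in an asymptotic Taylor expansion that will eventually be averaged (here, substituted into the trace expansion of $I_{3}$ in Theorem \ref{TM:PAIC}). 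Making this precise would amount to showing that all downstream uses of the lemma only depend on $E_{\textbf{y}}[(\hat{\theta}-\theta_{0})(\hat{\theta}-\theta_{0})']$, which by Lemma \ref{1a} equals $n^{-1}J_{n}^{-1}(\theta_{0})I(\theta_{0})J_{n}^{-1}(\theta_{0})+o(n^{-1})$. All other steps are routine matrix algebra and direct citations of the preceding lemmas.
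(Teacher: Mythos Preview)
Your proposal is correct and follows essentially the same route as the paper: decompose $\theta-\theta_{0}$ through $\hat{\theta}$, expand the outer product, and apply Lemmas \ref{1a}, \ref{1b}, \ref{1c} to the three resulting pieces. The subtlety you flag about replacing the random outer product $(\hat{\theta}-\theta_{0})(\hat{\theta}-\theta_{0})'$ by its asymptotic covariance is real and is glossed over in the paper's one-line proof as well; your reading of $\cong$ as ``equal after averaging over $\textbf{y}$'' is exactly how the paper uses the lemma downstream.
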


\begin{proof}
$E_{\theta |\textbf{y}}(\theta _{0}-\theta )(\theta _{0}-\theta
)^{\prime }$ can be rewritten as $(\theta _{0}-\hat{\theta})(\theta _{0}-%
\hat{\theta})^{\prime }+E_{\theta |\textbf{y}}(\hat{\theta}-\theta )(\hat{\theta}%
-\theta )^{\prime }+2E_{\theta |\textbf{y}}(\theta _{0}-\hat{\theta})(\hat{\theta}%
-\theta )$. Applying Lemmas \ref{1a}, \ref{1b} and \ref{1c}, we complete the proof.  
\end{proof}

\begin{lemma}\label{1e} 
Under the same regularity conditions of Theorem \ref{TM:PAIC}, 
\begin{eqnarray*}
E_{\theta |\textbf{y}}\frac{1}{n}\log \{L(\textbf{y}|\theta )\pi (\theta )\} &\simeq &\frac{1}{%
n}\log \{L(\theta _{0}|\textbf{y})\pi (\theta _{0})\} \\
&&+\frac{1}{2n}(tr\{J_n^{-1}(\theta _{0})I(\theta _{0})\}-tr\{J_{n}^{-1}(\hat{%
\theta})J_{n}(\theta _{0})\})+O_{p}(n^{-1}).
\end{eqnarray*}
\end{lemma}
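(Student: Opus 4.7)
The plan is to Taylor-expand $\log\{L(\textbf{y}|\theta)\pi(\theta)\}$ twice about two different points. First I expand about the empirical mode $\hat\theta$, where the linear term vanishes; taking a posterior expectation and invoking Lemma \ref{1b} relates $E_{\theta|\textbf{y}}\tfrac{1}{n}\log\{L(\theta|\textbf{y})\pi(\theta)\}$ to $\tfrac{1}{n}\log\{L(\hat\theta|\textbf{y})\pi(\hat\theta)\}$. Second, I expand $\log\{L(\hat\theta|\textbf{y})\pi(\hat\theta)\}$ about the expected mode $\theta_0$; the reappearing score term, handled via Lemma \ref{1a}, is what produces the $\text{tr}\{J_n^{-1}(\theta_0)I(\theta_0)\}$ piece of the final answer.

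For the first step, the Taylor expansion about $\hat\theta$ has no linear contribution because $\hat\theta$ maximizes the unnormalized log posterior, so $\log\{L(\theta|\textbf{y})\pi(\theta)\} = \log\{L(\hat\theta|\textbf{y})\pi(\hat\theta)\} - \tfrac{n}{2}(\theta-\hat\theta)'J_n(\hat\theta)(\theta-\hat\theta) + o_p(1)$. Applying $E_{\theta|\textbf{y}}$, dividing by $n$, and using Lemma \ref{1b} (so that $E_{\theta|\textbf{y}}(\theta-\hat\theta)(\theta-\hat\theta)' = n^{-1}J_n^{-1}(\hat\theta) + o_p(n^{-1})$) with the cyclic property of the trace yields $E_{\theta|\textbf{y}}\tfrac{1}{n}\log\{L(\theta|\textbf{y})\pi(\theta)\} = \tfrac{1}{n}\log\{L(\hat\theta|\textbf{y})\pi(\hat\theta)\} - \tfrac{1}{2n}\text{tr}\{J_n^{-1}(\hat\theta)J_n(\theta_0)\} + o_p(n^{-1})$, where I have used Lemma \ref{1a} and continuity of $J_n(\cdot)$ to replace one copy of $J_n(\hat\theta)$ by $J_n(\theta_0)$ at cost $o_p(n^{-1})$.

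For the second step, Taylor-expanding $\log\{L(\hat\theta|\textbf{y})\pi(\hat\theta)\}$ about $\theta_0$ gives $\log\{L(\hat\theta|\textbf{y})\pi(\hat\theta)\} = \log\{L(\theta_0|\textbf{y})\pi(\theta_0)\} + (\hat\theta-\theta_0)'s(\theta_0) - \tfrac{n}{2}(\hat\theta-\theta_0)'J_n(\theta_0)(\hat\theta-\theta_0) + o_p(1)$, where $s(\theta_0)$ denotes the gradient of $\log\{L(\theta|\textbf{y})\pi(\theta)\}$ at $\theta_0$. Lemma \ref{1a} provides $s(\theta_0) \simeq nJ_n(\theta_0)(\hat\theta-\theta_0)$, so the linear term equals $n(\hat\theta-\theta_0)'J_n(\theta_0)(\hat\theta-\theta_0)$ and the linear-plus-quadratic piece collapses to $+\tfrac{n}{2}(\hat\theta-\theta_0)'J_n(\theta_0)(\hat\theta-\theta_0)$. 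The limiting law $\sqrt{n}(\hat\theta-\theta_0) \sim N(0, J_n^{-1}(\theta_0)I(\theta_0)J_n^{-1}(\theta_0))$ from Lemma \ref{1a} identifies the mean of this random $O_p(1)$ quadratic form as $\tfrac{1}{2}\text{tr}\{J_n(\theta_0) J_n^{-1}(\theta_0) I(\theta_0) J_n^{-1}(\theta_0)\} = \tfrac{1}{2}\text{tr}\{J_n^{-1}(\theta_0)I(\theta_0)\}$.

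Combining the two expansions, dividing by $n$ in the second, and replacing the random quadratic by its asymptotic mean produces the stated approximation. The main obstacle is the bookkeeping of stochastic orders: since $s(\theta_0) = O_p(\sqrt{n})$ while $\hat\theta-\theta_0 = O_p(n^{-1/2})$, their products yield genuinely $O_p(1)$ quantities that are non-negligible at the scale $n^{-1}$ of the lemma. The $\simeq$ and the residual $O_p(n^{-1})$ must therefore be read in an expected-value sense, which is exactly how Lemma \ref{1e} is used in the main proof of Theorem \ref{TM:PAIC}: a further $E_{\textbf{y}}$ is applied, causing the fluctuation of the quadratic form about its mean to vanish and leaving the clean identity that feeds the derivation of PAIC.
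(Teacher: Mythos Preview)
Your argument is correct at the same level of rigor as the paper's, but the route is genuinely different. The paper performs a \emph{single} Taylor expansion of $\frac{1}{n}\log\{L(\theta|\textbf{y})\pi(\theta)\}$ about $\theta_0$ and then takes the posterior expectation term by term; because the expansion point is not the posterior mode, the linear term does not vanish and must be rewritten as $E_{\theta|\textbf{y}}(\theta-\theta_0)'J_n(\theta_0)(\hat\theta-\theta_0)$ via the score identity. Handling that cross term and the quadratic $E_{\theta|\textbf{y}}(\theta-\theta_0)'J_n(\theta_0)(\theta-\theta_0)$ requires the full battery of Lemmas \ref{1a}, \ref{1c} and \ref{1d}, with the decomposition $\theta-\theta_0=(\theta-\hat\theta)+(\hat\theta-\theta_0)$ carried throughout. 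Your two-stage expansion cleanly separates the two sources of randomness: the first expansion about $\hat\theta$ isolates the posterior averaging and is dispatched by Lemma \ref{1b} alone, while the second expansion from $\hat\theta$ to $\theta_0$ involves only data randomness and needs only Lemma \ref{1a}. You therefore bypass Lemmas \ref{1c} and \ref{1d} entirely, which is a real economy. The price is that you must justify swapping $J_n(\hat\theta)$ for $J_n(\theta_0)$ in the first trace, but as you note this costs only $O_p(n^{-3/2})$. Your closing remark about the $\simeq$ being meaningful only after a further $E_{\textbf{y}}$ is exactly right and applies equally to the paper's proof: both arguments replace the random quadratic $(\hat\theta-\theta_0)(\hat\theta-\theta_0)'$ by its asymptotic mean $n^{-1}J_n^{-1}(\theta_0)I(\theta_0)J_n^{-1}(\theta_0)$, which is only valid in that averaged sense.
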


\begin{proof}
The posterior mean of the log joint density distribution of $(\textbf{y},\theta )$ can be Taylor-expanded around $\theta _{0}$ as%
\begin{eqnarray}
E_{\theta |\textbf{y}}\frac{1}{n}\log \{L(\theta |\textbf{y})\pi (\theta )\} &=&\frac{1}{n}%
\log \{L(\theta _{0}|\textbf{y})\pi (\theta _{0})\}+E_{\theta |\textbf{y}}(\theta -\theta
_{0})^{\prime }\frac{1}{n}\frac{\partial \log \{L(\theta |\textbf{y})\pi (\theta )\}}{%
\partial \theta }|_{\theta =\theta _{0}}  \notag \\
&&+\frac{1}{2}E_{\theta |\textbf{y}}(\theta -\theta _{0})^{\prime }\frac{1}{n}\frac{%
\partial ^{2}\log \{L(\theta |\textbf{y})\pi (\theta )\}}{\partial \theta \partial
\theta ^{\prime }}|_{\theta =\theta _{0}}(\theta -\theta
_{0})+o_{p}(n^{-1})  \notag \\
&=&\frac{1}{n}\log \{L(\theta _{0}|\textbf{y})\pi (\theta _{0})\}+E_{\theta
|\textbf{y}}(\theta -\theta _{0})^{\prime }\frac{1}{n}\frac{\partial \log \{L(\theta
|\textbf{y})\pi (\theta )\}}{\partial \theta }|_{\theta =\theta _{0}}  \notag \\
&&-\frac{1}{2}E_{\theta |\textbf{y}}(\theta -\theta _{0})^{\prime }J_{n}(\theta
_{0})(\theta -\theta _{0})+o_{p}(n^{-1}).  \label{1e1}
\end{eqnarray}%

Expand $\frac{\partial \log \{L(\theta |\textbf{y})\pi (\theta )\}}{\partial \theta }|_{\theta=\hat{\theta}}$ around $\theta _{0}$ to the first order, we obtain 
\begin{equation}
\frac{\partial \log \{L(\theta |\textbf{y})\pi (\theta )\}}{\partial \theta }|_{\theta =\hat{\theta}}=\frac{\partial \log \{L(\theta |\textbf{y})\pi (\theta )\}}{\partial \theta }|_{\theta =\theta _{0}}-nJ_{n}(\theta _{0})(\hat{\theta}-\theta _{0})+O_{p}(n^{-1}). \label{A4}
\end{equation} 
Because the posterior mode $\hat{\theta}$ is the solution of $\frac{\partial \log \{L(\theta |\textbf{y})\pi (\theta )\}}{\partial \theta }=0$, the equation (\ref{A4}) can be re-written as  
\begin{equation*}
\frac{1}{n}\frac{\partial \log \{L(\theta |\textbf{y})\pi (\theta )\}}{\partial
\theta }|_{\theta =\theta _{0}}=J_{n}(\theta _{0})(\hat{\theta}-\theta
_{0})+O_{p}(n^{-1}).
\end{equation*}%
Substituting it into the second term of (\ref{1e1}), the expansion of $%
E_{\theta |\textbf{y}}\frac{1}{n}\log \{L(\theta |\textbf{y})\pi (\theta )\}$ becomes:%
\begin{eqnarray*}
E_{\theta |\textbf{y}}\frac{1}{n}\log \{L(\theta |\textbf{y})\pi (\theta )\} &=&\frac{1}{n}%
\log \{L(\theta _{0}|\textbf{y})\pi (\theta _{0})\}+E_{\theta |\textbf{y}}(\theta -\theta
_{0})^{\prime }J_{n}(\theta _{0})(\hat{\theta}-\theta _{0}) \\
&&-\frac{1}{2}E_{y}E_{\theta |\textbf{y}}(\theta -\theta _{0})^{\prime }J_{n}(\theta
_{0})(\theta -\theta _{0})+o_{p}(n^{-1}) \\
&=&\frac{1}{n}\log \{L(\theta _{0}|\textbf{y})\pi (\theta _{0})\}+tr\{E_{\theta |\textbf{y}}[(%
\hat{\theta}-\theta _{0})(\theta -\theta _{0})^{\prime }]J_{n}(\theta _{0})\}
\\
&&-\frac{1}{2}tr\{E_{\theta |\textbf{y}}[(\theta -\theta _{0})(\theta -\theta
_{0})^{\prime }]J_{n}(\theta _{0})\}+o_{p}(n^{-1}) \\
&=&\frac{1}{n}\log \{L(\theta _{0}|\textbf{y})\pi (\theta _{0})\}+tr\{E_{\theta
|\textbf{y}}[(\theta -\theta _{0})(\hat{\theta}-\theta _{0})^{\prime }]J_{n}(\theta
_{0})\} \\
&&-\frac{1}{2}tr\{\frac{1}{n}[J_{n}^{-1}(\hat{\theta})+J_{n}^{-1}(\theta
_{0})I(\theta _{0})J_{n}^{-1}(\theta _{0})]J_{n}(\theta
_{0})\}+o_{p}(n^{-1})
\end{eqnarray*}%
where in the last line we replace $E_{\theta |\textbf{y}}[(\theta -\theta
_{0})(\theta -\theta _{0})^{\prime }]$ with the result of Lemma \ref{1d}. $%
E_{\theta |\textbf{y}}[(\theta -\theta _{0})(\hat{\theta}-\theta _{0})^{\prime }]$ in
the second term of the expansion can be rewritten as $E_{\theta |\textbf{y}}[(\hat{%
\theta}-\theta _{0})(\hat{\theta}-\theta _{0})^{\prime }]+E_{\theta
|\textbf{y}}[(\theta -\hat{\theta})(\hat{\theta}-\theta _{0})^{\prime }]$, where the
former term is asymptotically equal to $\frac{1}{n}J_{n}^{-1}(\theta
_{0})I(\theta _{0})J_{n}^{-1}(\theta _{0})$ by Lemma \ref{1a}, and the latter
is negligible with higher order $o_{p}(n^{-1})$, as shown in Lemma \ref{1c}.
Therefore, the expansion can be finally simplified as%
\begin{eqnarray}
E_{\theta |\textbf{y}}\frac{1}{n}\log \{L(\textbf{y}|\theta )\pi (\theta )\} &\simeq &\frac{1}{%
n}\log \{L(\theta _{0}|\textbf{y})\pi (\theta _{0})\} \nonumber\\
&&+\frac{1}{2n}(tr\{J_n^{-1}(\theta _{0})I(\theta _{0})\}-tr\{J_{n}^{-1}(\hat{%
\theta})J_{n}(\theta _{0})\})+O_{p}(n^{-1}). \nonumber
\end{eqnarray}
\end{proof}

\end{document}